\newtheorem{proposition}{\bf Proposition}
\newtheorem{definition}{Definition}
\newtheorem{lemma}{Lemma}
\newtheorem{theorem}{Theorem}
\newtheorem{remark}{Remark}
\newcommand{\R}{\mathbb{R}}
\newcommand{\V}{\mathbb{V}}
\newcommand{\ran}{\mathrm{ran}}
\newcommand{\mbf}[1]{\mathbf{#1}}
\newcommand{\mc}[1]{\mathcal{#1}}
\newcommand{\tit}[1]{\textit{#1}}
\newcommand{\mrm}[1]{\mathrm{#1}}
\newcommand{\delx}{\frac{\partial}{\partial \mathbf{x}}}
\DeclareMathOperator*{\argmin}{arg\,min}
\begin{document}
\begin{frontmatter}

\title{Using Seminorms To Analyze Contraction of Switched Systems With Only Non-Contracting Modes } 

\author[First]{Edwin Baum}
\author[First]{Zonglin Liu} 
\author[Second]{Yuzhen Qin} 
\author[First]{Olaf Stursberg} 

\address[First]{Control and System Theory, Dept. of Electrical Engineering and Computer Science, University of Kassel, Germany.\\
	Email:  uk069490@student.uni-kassel.de, z.Liu@uni-kassel.de,  y.qin@donders.ru.nl,  stursberg@uni-kassel.de.}
\address[Second]{Department of Machine Learning and Neural Computing, Donders Institute for Brain, Cognition and Behaviour, Radboud University, the Netherlands\\
 yuzhen.qin@donders.ru.nl}

\begin{abstract}
This paper investigates contraction properties of switched dynamical systems for the case that all modes are non-contracting, thereby extending existing results that require at least one mode to be contracting.  Leveraging the property that unstable systems may still exhibit stable behavior  within certain subspaces, conditions are provided which ensure contracting evolution within a given subspace of the state space of the switched system. These conditions are derived using the concepts of seminorms and semi-contracting systems. Then, by selecting a set of subspaces whose corresponding seminorms form a separating family of the state space, and by verifying whether a given mode is contracting in each subspace, conditions on the activation time of each mode are provided by which contraction on the complete state space is guaranteed. Numerical examples are presented for illustration.
\end{abstract}
	
\begin{keyword}
Switched systems, Nonlinear systems, Contraction analysis, Stability analysis.
\end{keyword}

\end{frontmatter}

\section{Introduction}
\label{sec:intro}

Contractive dynamical systems possess the property that any two trajectories converge to each other regardless of their initial conditions. This property proves useful for a wide range of tasks, such as the stability analysis of equilibria \citep{sontag2010contractive}, the development of asymptotic tracking controllers \citep{miljkovic2025reference}, and the evaluation of synchronization conditions in networked systems \citep{jafarpour_weak_2022, liu2025synchronization}.
For systems that can switch dynamics according to an external switching signal, such as those considered in \citep{hespanha1999stability, liu2018optimizing}, both the local dynamics of each mode and the switching signal influence whether the overall evolution is contracting.

Unlike the numerous studies on the property of contraction of time-triggered or state-triggered switching systems, such as those in \citep{di2014incremental, di2013incremental, burden2018contraction}, there are only a few publications addressing switched systems. Among these, the studies in \citep{yang2016incremental, mccloy2021contraction, di2014contraction} have shown that even if all modes are contracting, only switching signals that satisfy certain dwell time requirements can ensure a contracting evolution. For switched systems with a mix of contracting and non-contracting modes, the recent work \citep{yin_contraction_2023} generalizes previous results by providing an upper bound on the dwell time of non-contracting modes, and a lower bound on the dwell time of contracting modes to guarantee contraction. To the best of the authors' knowledge, no work has considered the case when all modes are non-contracting. (The authors in \citep{yin_contraction_2023} demonstrate the potential to extend their results to this case though, and the precise distinction of that work from the present paper is described in Sec.~2.)

Motivated by the subspace decomposition method developed in \cite{qin_synchronization_2020} for demonstrating synchronization of networked systems with switching unconnected graphs, this paper exploits the concepts of seminorms and semi-contraction systems, as discussed in \cite{cisneros-velarde_contraction_2022, jafarpour_weak_2022, de_pasquale_dual_2024} in order to show the contraction of switching non-contracting modes. Based on these concepts, the first question addressed is: under what switching time conditions does the switched system contract within a given subspace, even if it is non-contracting in the full state space? Building on this result, a set of subspaces is selected to jointly determine the switching time conditions under which the switched system contracts in the full state space. In Sec.~2,  the considered class of switched systems and the contraction problem are  described  in detail, along with relevant mathematical preliminaries. 
In Sec.~3, the rules for selecting subspaces and the switching time conditions that enable contraction within the subspaces and the full state space are introduced.  Two numerical examples confirming the results are provided in Sec.~4, followed by conclusions and an outlook on future directions in Sec.~5.

\section{Problem Description}
\label{sec:prob}

This paper considers a  type of switched systems in the form of:
\begin{align}\label{eq:sys}
    \dot{x}(t) =   \mathbf{f}_{\sigma (t)}(x(t)),\quad  x(t_0) =x_0
\end{align}
with the system state $x \in \mathbb{R}^n$, the initial time $t_0$, and the initial state $x_0$. Given a set  $\mathcal{M} = \{1,\ldots, M\}$ of system modes,  the switching signal $\sigma(t): [t_0, \infty) \mapsto \mathcal{M}$ is assumed to be a piecewise constant, right-continuous function that determines the active mode  at any time $t \ge t_0$. The switching instants are defined by $\mathcal{T}_\sigma:= \{t_0, \ldots, t_k,\ldots\} \subseteq [t_0, \infty)$ with  the dwell time $\tau_k := t_{k+1} -t_k >0$.
\begin{definition}\label{UfC}(\textbf{Uniformly Contracting Switched systems}(\cite{yin_contraction_2023}))
The switched system \eqref{eq:sys} for a given switching signal $\sigma(t)$  is 
 called uniformly contracting if there exist positive constants $c$ and $\alpha$, such that for any two
 different initial states $x_a(t_0)$ and $x_b(t_0)$ it holds for all $t \ge t_0$ that:
    \begin{align}\label{eq:uniformcontracting}
        \|x_a(t)-x_b(t)\|\leq \alpha \mrm{e}^{-c(t-t_0)} \|x_a(t_0)-x_b(t_0)\|.
    \end{align}
\end{definition}
To study the property of contraction of \eqref{eq:sys}, a variational system with new variables $y(t) \in \R^n$ is defined, where:
    \begin{align}\label{eq:var_sys}
         \dot{y}(t) = \mathbf{A}_{\sigma(t)}(x(t)) y(t),~ \mathbf{A}_{\sigma(t)}(x(t))= \delx \mathbf{f}_{\sigma(t)} (x(t)).
     \end{align}
\begin{theorem}\label{theorem:contractingvariation}(\cite{yin_contraction_2023})
The switched system \eqref{eq:sys} is uniformly contracting for a given switching signal $\sigma(t)$, if and only if the 
  variational system \eqref{eq:var_sys} is  exponentially stable for all  $y(t_0) \in \R^n$ and $t \ge t_0$ if:
    \begin{align}\label{eq:uniformstabilizing}
        \|y(t)\|\leq \alpha \mrm{e}^{-c(t-t_0)} \|y(t_0)\|
    \end{align}
with positive constants $c$ and for $\alpha$ according to Def.~\ref{UfC}. 
\end{theorem}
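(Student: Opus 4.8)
The plan is to prove the two directions of the equivalence separately, both hinging on the classical fact that the variational state $y(t)$ is precisely the directional derivative of the flow of \eqref{eq:sys} with respect to its initial condition. First I would record the standing regularity assumption that each $\mathbf{f}_i$ is $C^1$, so that on every interval $[t_k,t_{k+1})$ on which $\sigma$ is constant the solution of \eqref{eq:sys} exists, is unique, and is continuously differentiable in the initial state; since $x(\cdot)$ is continuous at the switching instants, the sensitivity matrix $\Phi(t,t_0):=\partial x(t)/\partial x_0$ is well defined, continuous in $t$, and satisfies the matrix form of \eqref{eq:var_sys} with $\mathbf{A}_{\sigma(t)}(x(t))$ on each such sub-interval. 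Hence $y(t)=\Phi(t,t_0)\,y(t_0)$ for every solution of the variational system, and \eqref{eq:uniformstabilizing} holding for all $y(t_0)\in\R^n$ is equivalent to $\|\Phi(t,t_0)\|\le \alpha\,\mrm{e}^{-c(t-t_0)}$ in the induced norm.

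For the ``if'' direction I would assume this bound on $\Phi$ and, given two initial states $x_a(t_0)$ and $x_b(t_0)$, connect them by the straight segment $\gamma(s)=x_a(t_0)+s\,(x_b(t_0)-x_a(t_0))$, $s\in[0,1]$, with $x_s(t)$ the corresponding solution of \eqref{eq:sys}. Since $\partial x_s(t_0)/\partial s = x_b(t_0)-x_a(t_0)$ and $\partial x_s(t)/\partial s$ solves the variational equation along $x_s$, writing $x_b(t)-x_a(t)=\int_0^1 \partial x_s(t)/\partial s\,\mrm{d}s$ and taking norms yields \eqref{eq:uniformcontracting} with the same $\alpha,c$, using that the sensitivity estimate holds uniformly along every trajectory. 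For the ``only if'' direction I would assume \eqref{eq:sys} is uniformly contracting, fix $y(t_0)=v$, and compare the solution $x_a$ from $x_0$ with the solution $x_b^{\varepsilon}$ from $x_0+\varepsilon v$: Def.~\ref{UfC} gives $\|x_b^{\varepsilon}(t)-x_a(t)\|\le \alpha\,\mrm{e}^{-c(t-t_0)}\,\varepsilon\,\|v\|$, and dividing by $\varepsilon$ and letting $\varepsilon\to0$ turns the left-hand side into $\|\Phi(t,t_0)v\|=\|y(t)\|$, which is \eqref{eq:uniformstabilizing}.

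The step I expect to be the main obstacle is not either implication on its own but the regularity bookkeeping at the switching instants: one must justify that, although $\mathbf{A}_{\sigma(t)}(x(t))$ is only piecewise continuous in $t$, the sensitivity $\Phi(t,t_0)$ is continuous across each $t_k$ and does solve \eqref{eq:var_sys} on each sub-interval, so that the infinitesimal and finite estimates are genuinely equivalent, and that differentiability of the flow with respect to $x_0$ survives the concatenation of modes. I would also note that either side of the equivalence precludes finite escape, so all the solutions invoked above exist for all $t\ge t_0$ and the straight-line homotopy stays in the domain of the flow; if global existence is not assumed a priori, this has to be argued from the bound itself before the path-integral step is valid.
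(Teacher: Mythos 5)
This theorem is imported from \cite{yin_contraction_2023} and the paper offers no proof of it, so there is nothing in the text to compare against line by line; your argument is the standard one for this equivalence (sensitivity matrix $\Phi(t,t_0)$ plus the straight-line homotopy for the ``if'' direction and the difference-quotient limit for the ``only if'' direction) and it is correct as far as it goes. The one point worth making explicit, which you use but only implicitly in the phrase ``holds uniformly along every trajectory,'' is that the theorem must be read as requiring \eqref{eq:uniformstabilizing} with the \emph{same} constants $\alpha,c$ for the variational system \eqref{eq:var_sys} evaluated along \emph{every} solution $x(\cdot)$ of \eqref{eq:sys} (not just one nominal trajectory); without that uniformity the path-integral step $\|x_b(t)-x_a(t)\|\le\int_0^1\|\Phi_s(t,t_0)\|\,\mrm{d}s\,\|x_b(t_0)-x_a(t_0)\|$ does not deliver \eqref{eq:uniformcontracting}. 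Your remarks on continuity of $\Phi$ across the switching instants (legitimate here because the switching times are fixed by $\sigma$ and not state-dependent) and on ruling out finite escape are exactly the regularity caveats the cited result also relies on.
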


Based on this theorem, the work by \cite{yin_contraction_2023}  established conditions for contraction of the switched system \eqref{eq:sys} if only a few modes in $\mathcal{M}$ are contracting. This represents an advancement over previous results by \cite{yang2016incremental, mccloy2021contraction}, in which all modes were required to be contractive. The authors there assign a  set of piecewise Lyapunov  functions to each system mode $q \in \mathcal{M}$. They show that if  certain minimal dwell time requirements are satisfied  by all contracting modes of the switching signal  $\sigma(t)$, while certain maximal  dwell time requirements are met for the remaining non-contracting modes, the value of the Lyapunov  function will   decrease exponentially over time, thereby implying \eqref{eq:uniformstabilizing}. They also show that even if all modes are non-contracting, the  switched system \eqref{eq:sys} may still be contracting when the value reduction caused by mode switching (i.e. the switching of  Lyapunov function) compensates for the increase induced by the flow dynamics of each non-contracting mode. However, the scheme that relies solely on mode switching to provide contraction can be overly conservative in practice, as it requires the modes of  $\sigma(t)$  to switch at a high frequency. 

To relax this conservative  condition to the case that all modes are non-contracting, the concept of semi-contracting systems is introduced in the following section. It is shown that even for non-contracting modes, a Lyapunov function that is defined in a certain subspace of $ \mathbb{R}^n$ may still   decrease over time. In this way, the contraction of the switched system \eqref{eq:sys} can already be ensured if the selected subspaces exhibit contracting dynamics for some mode of  $\sigma(t)$. Before presenting the main result, some important mathematical preliminaries are introduced.

\subsection{Mathematical Preliminaries}
In  this paper, the notion $\mrm{ran}(\cdot)$ denotes the range of a matrix, $\mrm{span}(\cdot)$ denotes the linear span of a set of vectors, and $\mrm{ker}(\cdot)$ denotes the kernel of a matrix.
\begin{definition}\label{def:orth_proj}(\textbf{Orthogonal Projection Matrices} (\cite{finite-dimensional}))
For a subspace $\V \subseteq \R^n$, a symmetric matrix  $\mbf{\Pi}_\V$ is called an orthogonal projection matrix onto $\V$ if:
     \begin{align}\label{eq:proj_def}
\mbf{\Pi}_\V^2 = \mbf{\Pi}_\V, ~ \mrm{ran}(\mbf{\Pi}_\V ) = \V.
     \end{align}
  \end{definition}   
Let $\{\mbf{v_1}, \mbf{v_2},\ldots,\mbf{v_h}\}\subset \V$, $h \le n$, be an orthonormal list, such that:
     \begin{align}
         \V = \mrm{span}(\{\mbf{v_1}, \mbf{v_2},\ldots,\mbf{v_h}\}).
     \end{align}
Define the matrix $\mbf{V} := [\mbf{v_1}, \mbf{v_2},...,\mbf{v_h}] \in \R^{n\times h}$, an orthogonal projection matrix $\mbf{\Pi}_\V$ onto $\V$ can be determined by:
     \begin{align}\label{eq:constractionPI}
         \mbf{\Pi}_\V =\mbf{{V} {V}}^T.
     \end{align}
For the orthogonal complement subspace $\V^\perp \subseteq \R^n$,   let $\{\mbf{u_1,u_2,\ldots,u_{n-h}}\}\subset \V^\perp$ be the orthonormal list, such that:
\begin{align}
         \V^\perp = \mrm{span}(\{\mbf{u_1}, \mbf{u_2},...,\mbf{u_{n-h}}\}).
     \end{align}
   Then, the matrix $\mbf{V}$ together with the matrix   $\mbf{U} := [\mbf{u_1}, \mbf{u_2},...,\mbf{u_{n-h}}] \in \R^{n\times (n-h)}$ constitute a  nonsingular and orthogonal matrix:
    \begin{align}\label{eq:Tmatrix}
   \mbf{T} = [\mbf{V}| \mbf{U}] \in \R^{n\times n}
        \end{align}
        which  diagonalizes $\mbf{\Pi}_\V$ by:
 \begin{align}\label{eq:proj_decom}
     \mbf{T}^T \mbf{\Pi}_\V  \mbf{T} = \begin{bmatrix}
         \mbf{I}_h & \mbf{0}_{n\times (n-h)} \\
         \mbf{0}_{(n-h)\times p} & \mbf{0}_{(n-h)\times (n-h)}
     \end{bmatrix}.
 \end{align}
Based on \eqref{eq:proj_decom}, there also exists:
 \begin{align}\label{eq:orth_proj_constr}
     \mbf{\Pi}_\V = \mbf{I}_n - \mbf{\Pi}_{\V^\perp}.
 \end{align}
 \begin{definition}(\textbf{Invariant subspace} (\cite{Bullo_Con_Theo_Dynam_Sys}))\label{def:invariantspace}
     Given a matrix $\mbf{A} \in \R^{n\times n}$ a subspace $\V \subseteq \R^n$ is called an invariant subspace of  $\mbf{A}$, if for any vector $\mbf{v} \in \V$, there  exists $\mbf{A} \mbf{v} \in \V$.
 \end{definition}
Based on Def.~\ref{def:invariantspace}, given  a subspace $\V \subseteq \R^{n}$ and a  symmetric matrix $\mbf{P} \in \R^{n\times n}$ satisfying $\ker(\mbf{P}) = \V^\perp$, the relation $\mbf{P} \V \subseteq \V$ must hold, which implies that  $\V$ is  an invariant subspace of $\mbf{P}$. In addition, for  the matrix $\mbf{T}$ in \eqref{eq:Tmatrix},  there also exists:
 \begin{align}\label{eq:P_tilde}
\mbf{T}^T \mbf{P} \mbf{T} \hspace{-0.5mm}= \hspace{-0.5mm}\begin{bmatrix}
\mbf{V}^T \mbf{P}\mbf{V} & \mbf{V}^T \mbf{P} \mbf{U}\\
\mbf{U}^T \mbf{P} \mbf{V} & \mbf{U}^T \mbf{P}\mbf{U}
\end{bmatrix} 
\hspace{-0.5mm}= \hspace{-0.5mm}\begin{bmatrix}
    \Tilde{\mbf{P}} & \mbf{0}\\
    \mbf{0}& \mbf{0}
\end{bmatrix},~\Tilde{\mbf{P}} \in \R^{h\times h}
 \end{align}
due to $\ran(\mbf{P}) = (\ker(\mbf{P}))^\perp$ and 
$\mbf{U}^T \mbf{P} \mbf{V} = \mbf{0}$.

\begin{definition}\label{def:semnorm}(\textbf{Seminorms} (\cite{rudin_functional_1996}))
A function $|||\cdot |||: \R ^{n} \rightarrow \R_{\geq 0}$  is a \tit{seminorm}, if:
\begin{enumerate}
    \item $||| \alpha \mbf{v}||| = |\alpha|\, |||\mbf{v} ||| $ for all $\mbf{v} \in \R^{n}$ and $\alpha \in \R$;
    \item $|||\mbf{v}+\mbf{w}|||\leq |||\mbf{v}|||+|||\mbf{w}|||$ for all $\mbf{v},\mbf{w} \in \R^n$.
\end{enumerate}
The kernel of a seminorm  is a subspace of $\R^n$  defined as:
\begin{align*}
    \mrm{ker}(|||\cdot |||):= \{ \mbf{v}\in \R^n ~|~ ||| \mbf{v}||| = 0\}.
\end{align*}
If $ \mrm{ker}(|||\cdot |||) = \{\mbf{0}\}$, then $|||\cdot |||$ is a \tit{norm}  denoted by $\| \cdot \|$. \\
\end{definition}

\begin{definition}\label{def:sep_fam}(\textbf{Separating Family} (\cite{rudin_functional_1996})) Let $\mc{F} = \{|||\cdot |||_i\}_{i \in \mc{I}}$, $\mc{I} := \{1,2,...,I\},\, I \in \mathbb{N}$, be a family of seminorms on $\R^n$. $\mc{F}$ is called  a \tit{Separating Family} of $\R^n$, if $\mbf{v} = \mbf{0}$ is the only vector in
 $\R^n$ satisfying:
\begin{align*}
  |||\mbf{v} |||_i = 0, ~ \forall i\in \mc{I}.
\end{align*}
\end{definition}
Note that  a  separating family also induces a norm $\| \cdot \|_{\mc{F}}$ on $\R^n$ due to the fact that $\ker(\sum_{i \in \mc{I}} |||\cdot|||_i) = \{\mbf{0}\}$.

\begin{proposition}\label{prop3}(\cite{treves2016topological})
 Let $\mc{F} = \{|||\cdot |||_i\}_{i \in \mc{I}}$, $\mc{I} := \{1,2,...,I\},\, I \in \mathbb{N}$, be a family of seminorms on $\R^n$ and if:
    \begin{align}\label{eq:conditionseperating}
        \bigcap_{i\in \mc{I}} \ker(|||\cdot|||_i) = \{\mbf{0}\}
    \end{align}
    holds,  $\mc{F}$ must be a separating family of $\R^n$.
\end{proposition}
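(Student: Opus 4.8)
The plan is to prove Proposition~\ref{prop3} directly from the definition of the kernel of a seminorm (Def.~\ref{def:semnorm}) and the definition of a separating family (Def.~\ref{def:sep_fam}); no deep functional-analytic machinery is required, since the statement is essentially a reformulation of one condition into the other. First I would fix an arbitrary vector $\mbf{v} \in \R^n$ and note that, by the very definition $\mrm{ker}(|||\cdot|||_i) = \{\mbf{w} \in \R^n \mid |||\mbf{w}|||_i = 0\}$, the scalar equality $|||\mbf{v}|||_i = 0$ is logically equivalent to the membership $\mbf{v} \in \mrm{ker}(|||\cdot|||_i)$. Quantifying over $i$, the assertion ``$|||\mbf{v}|||_i = 0$ for all $i \in \mc{I}$'' is therefore equivalent to ``$\mbf{v} \in \mrm{ker}(|||\cdot|||_i)$ for all $i \in \mc{I}$'', which by definition of intersection is the same as $\mbf{v} \in \bigcap_{i \in \mc{I}} \mrm{ker}(|||\cdot|||_i)$.

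Next I would invoke the hypothesis \eqref{eq:conditionseperating}, i.e. $\bigcap_{i \in \mc{I}} \mrm{ker}(|||\cdot|||_i) = \{\mbf{0}\}$, so that the chain of equivalences from the previous step yields: any $\mbf{v}$ with $|||\mbf{v}|||_i = 0$ for all $i \in \mc{I}$ necessarily satisfies $\mbf{v} = \mbf{0}$. Conversely, homogeneity (item~1 of Def.~\ref{def:semnorm} applied with $\alpha = 0$) gives $|||\mbf{0}|||_i = 0$ for every $i$, so $\mbf{0}$ does belong to this set of vectors. Hence $\mbf{v} = \mbf{0}$ is precisely the unique vector of $\R^n$ annihilated by every seminorm in $\mc{F}$, which is exactly the condition in Def.~\ref{def:sep_fam} for $\mc{F}$ to be a separating family of $\R^n$; this completes the proof.

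Since the argument is a one-line unwinding of the relevant definitions, I do not anticipate any genuine obstacle. The only points worth stating explicitly are that each $\mrm{ker}(|||\cdot|||_i)$ is a subspace (already recorded in Def.~\ref{def:semnorm}), so that the intersection is a well-defined subspace containing $\mbf{0}$, and that the equivalence chain above is reversible, so that \eqref{eq:conditionseperating} is in fact \emph{equivalent} to the separating property rather than merely sufficient for it; I would mention this as a remark in case the authors prefer to state the proposition as an ``if and only if''.
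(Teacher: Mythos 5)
Your proof is correct: the statement is indeed just a definitional unwinding, since $|||\mbf{v}|||_i = 0$ for all $i \in \mc{I}$ is by definition the same as $\mbf{v} \in \bigcap_{i\in\mc{I}}\ker(|||\cdot|||_i)$, and the hypothesis \eqref{eq:conditionseperating} then forces $\mbf{v} = \mbf{0}$, which is exactly Def.~\ref{def:sep_fam}. The paper itself supplies no proof for this proposition (it is cited from \cite{treves2016topological}), so there is nothing to diverge from; your argument is the standard one, and your remark that the condition is in fact equivalent to (not merely sufficient for) the separating property is also accurate.
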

A seminorm can be induced from, e.g., the Euclidean norm $\|\cdot \|_2$. For instance, given a subspace  $\V \subseteq \R^{n}$ and the corresponding orthogonal projection $\mbf{\Pi}_\V$, the Euclidean norm $\|\mbf{\Pi}_\V \,(\hspace{.1cm}\cdot\hspace{.1cm})\,\|_2$ defines a seminorm $|||\cdot|||$ on $\R^n$ by:
\begin{align}\label{eq:Proj_sem_norm}
    |||\cdot||| :=\|\mbf{\Pi}_\V \,(\hspace{.1cm}\cdot\hspace{.1cm})\,\|_2
\end{align}
with $\mrm{ker}(|||\cdot |||)=\V^\perp$.
Furthermore,  a weighted seminorm with a semi-positive definite matrix $\mbf{P}\in \R^{n\times n}$, $\ker(\mbf{P}) = \V^\perp$, is defined as:
\begin{align}\label{eq:P_sem_norm}
    |||\cdot |||_{2,\mbf{P}^{1/2}} = \|\mbf{P}^{\frac{1}{2}}\,(\hspace{.1cm}\cdot\hspace{.1cm})\,\|_2
\end{align}
with $\mrm{ker}( |||\cdot |||_{2,\mbf{P}^{1/2}})=\V^\perp$.
The  induced weighted matrix logarithmic seminorm of  $\mathbf{A}$ is given by:
\begin{align}\label{eq:l2-measure}
& \mu_{2,\mbf{P}^{1/2}}(\mathbf{A}):=\argmin\limits_{\mathbf{P}\mathbf{A}\boldsymbol{\Pi}_{\V^\perp}+ \boldsymbol{\Pi}_{\V^\perp}\mathbf{A}^T  \mathbf{P} \preceq 2b \mathbf{P} }b 
\end{align}
according to \cite{de_pasquale_dual_2024}.

\begin{definition}\label{def:semcon}
\textbf{(Semi-contracting Systems and Infinitesimal Invariance}  (\cite{jafarpour_weak_2022})).  
Given  a subspace  $\V \subseteq \R^{n}$ and let $ |||\cdot |||_{2,\mbf{P}^{1/2}}$ 
be a weighted  seminorm on $\mathbb{R}^n$ with a semi-positive definite matrix $\mbf{P}\in \R^{n\times n}$ satisfying $\ker(\mbf{P}) = \V^\perp$. 
Then, the mode  $\dot{x}=\mathbf{f}_q(x)$, $q \in \mathcal{M}$ of \eqref{eq:sys}  is  infinitesimally semi-contracting with respect to   $ |||\cdot |||_{2,\mbf{P}^{1/2}}$   with rate $c > 0$ if:
\begin{align}    \label{eq:semicontracting}
     \mu_{|||\cdot |||_{2,\mbf{P}^{1/2}} } \!\big( \delx\mathbf{f}_q( x) \big) \;\le\; -c,
    \quad \forall ~ x \in \mathbb{R}^n.
\end{align}
In addition, the subspace  $\V$ is called infinitesimally-invariant with respect to $\dot{x}=\mathbf{f}_q(x)$  if:
\begin{align}    \label{eq:inv-diff}
    \delx\mbf{f}_q(x)\,\V =\mbf{A}_q(x)\,\V  &\subseteq \V
\end{align}
or equivalently:
\begin{align}   \label{eq:projection-form}
    \mbf{\Pi}_{\V^\perp} \delx\mbf{f}_q(x) \mbf{\Pi}_{\V} = \mbf{\Pi}_{\V^\perp}\mbf{A}_q(x) \mbf{\Pi}_{\V} = \mbf{0}_{n \times n}
\end{align}
hold for all $ x \in \mathbb{R}^n$.

\end{definition}

\section{Contraction Analysis  Using a Separating Family of Seminorms}\label{sec:Contraction}

For  the switched system \eqref{eq:sys} and 
any finite time interval $[t_a, t_b) \subseteq [t_0, \infty)$, let $N_{q}(t_a, t_b) \in \mathbb{N}$  denote the  number of times  mode $q$ is activated, and  let $T_q(t_a, t_b) \in \mathbb{R}_{\ge 0}$ denote its total activation time within the interval $[t_a, t_b)$. Then, the concepts of average dwell time and average leave time are introduced:
\begin{definition}
(\noindent\textbf{Mode-Dependent Average Dwell and Leave Time} (\cite{yin_contraction_2023}))  
For a switching signal $\sigma : [t_0, \infty) \to \mathcal{M}$ and any  mode $q \in \mathcal{M}$ of \eqref{eq:sys}, the positive constants $\underline{\tau}_{q}$ and $\overline{\tau}_{q}$ are called the  mode-dependent average dwell time (MDADT) and the  mode-dependent average leave   time  (MDALT) of $q$, respectively, if there exist positive constants  $\underline{N}_{q}$ and  $\overline{N}_{q}$, such that for any finite time interval  $[t_a, t_b) \subseteq [t_0, \infty)$:
\begin{align}
    N_{q}(t_a, t_b) \leq  \underline{N}_{q} + \frac{T_q(t_a, t_b)}{\underline{\tau}_{q}},
    \label{eq:MDADT}\\ 
     N_{q}(t_a, t_b) \geq \overline{N}_{q} + \frac{T_q(t_a, t_b)}{\overline{\tau}_{q}}.
    \label{eq:MDALT}
\end{align}
\end{definition}


\subsection{Contraction Analysis for a Subspace $\V \subseteq \R^n$}

Next, given  a subspace $\V \subseteq \R^n$ and a  semi-positive definite matrix $\mbf{P}\in \R^{n\times n}$ satisfying $\ker(\mbf{P}) = \V^\perp$,
 the  evolution of the mode $q \in \mathcal{M}$ of the variational system \eqref{eq:var_sys} is assumed to be measured by the weighted logarithmic seminorm $\mu_{2,\mbf{P}^{1/2}}(\mbf{A}_q(x(t)))$. Additionally, for a 
 bounded set $ \mc{D} \subseteq \R^{n}$, the system modes in $\mathcal{M}$ can be decomposed into two subsets, $\mathcal{M}=\mc{S}_{\mbf{P}} \cup \mc{U}_{\mbf{P}}$, where:
\begin{align}\label{eq:stablemodes}
\mc{S}_{\mbf{P}}:=\{q \in \mathcal{M}:~ \mu_{2,\mbf{P}^{1/2}}(\mbf{A}_q(x))<0, \, \forall x \in \mc{D}\},
\end{align}
and $\mc{U}_{\mbf{P}}$ collects the remaining modes.


\begin{lemma}\label{lem:sem-con}
For the switched system \eqref{eq:sys} with  a  switching signal $\sigma(t)$ and the switching times $\mathcal{T}_\sigma := \{ t_0, t_1, \ldots \}$, assume that  $x(t) \in \mc{D}  \subseteq \R^{n}$ uniformly holds for all $t \ge t_0$. Then, for a  subspace $\V \subseteq \R^n$, let  $\mbf{\Pi}_\V\in \R^{n\times n}$ be an  orthogonal projection matrix onto  $\V$.
Assume that $\V^\perp$ is infinitesimally-invariant with respect to all modes   $q \in \mathcal{M}$ of   \eqref{eq:sys}.  Furthermore, consider a set of semi-positive definite matrices
 $\mbf{P}_q\in \R^{n\times n}$ with $\ker(\mbf{P}_q) = \V^\perp$ for each mode $q \in  \mathcal{M}$, and constants $\beta_\mc{S} >1$ and $\beta_\mc{U} \in (0,1)$, such that:
 \begin{align}\label{eq:LMI1}
   &  {\mbf{P}}_{\sigma(t^+_i)} \preceq \beta_\mc{S} {\mbf{P}}_{\sigma(t_i^-)}, ~ \text{if} ~ \sigma(t_i^-) \in \mc{S}_{\mbf{P}_{\sigma(t_i^-)}}, \notag\\
   &  {\mbf{P}}_{\sigma(t^+_i)} \preceq \beta_\mc{U} {\mbf{P}}_{\sigma(t_i^-)}, ~ \text{if} ~ \sigma(t_i^-) \in \mc{U}_{\mbf{P}_{\sigma(t_i^-)}}
 \end{align}
 hold for all switching time $t_i \in \mc{T}_\sigma$.
Then,  if  there exist positive constants $\underline{m}$, $\overline{m}$, $\eta_{\mc{U}}$, and $\eta_{\mc{S}}$, such that:
\begin{align}
 &   \underline{m} \mbf{\Pi}_{\V}\preceq \mbf{P}_q \preceq  \overline{m} \mbf{\Pi}_{\V}, ~\forall q \in  \mathcal{M} \label{eq:LMI2} \\
& \mbf{P}_q \mbf{A}_q(x)  \mbf{\Pi}_{\V} \hspace{-0.5mm}+\hspace{-0.5mm}\mbf{\Pi}_{\V} \mbf{A}^T _q(x)  \mbf{P}_q \hspace{-0.5mm}\preceq \hspace{-0.5mm}-2\eta_{\mc{S}} \mbf{P}_q,~ \text{if} ~ q \in \mc{S}_{\mbf{P}_q} \label{eq:LMI3} \\
& \mbf{P}_q \mbf{A}_q(x)  \mbf{\Pi}_{\V} \hspace{-0.5mm}+\hspace{-0.5mm}\mbf{\Pi}_{\V} \mbf{A}^T _q(x)  \mbf{P}_q \hspace{-0.5mm}\preceq \hspace{-0.5mm}2\eta_{\mc{U}} \mbf{P}_q, ~\text{if} ~ q \in \mc{U}_{\mbf{P}_q} \label{eq:LMI4}
\end{align}
are satisfied,  and if the  average dwell  and  leave   times  of any mode $q \in  \mathcal{M}$ in the switching signal $\sigma(t)$ satisfy:
 \begin{align} 
\underline{\tau}_{q} &> \frac{\ln \beta_\mc{S}}{2\eta_{\mc{S}}}, ~ \text{if}~ q \in  \mc{S}_{\mbf{P}_q}\label{SMADTi}\\
\overline{\tau}_{q} &<-\frac{\ln \beta_\mc{U}}{2\eta_{\mc{U}}}, ~ \text{if}~ q \in \mc{U}_{\mbf{P}_q}\label{UMADTi}
\end{align}
there must exist $k, \lambda>0$, such that:
\begin{align}\label{eq:Conv_Rate}
    \|\mbf{\Pi}_{\V} y(t)\|_2\leq k \mrm{e}^{-\lambda(t-t_0)} \|\mbf{\Pi}_{\V} y(t_0)\|_2
\end{align}
holds for the variational system \eqref{eq:var_sys} for all $t \ge t_0$.

\end{lemma}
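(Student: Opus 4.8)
The plan is to track the evolution of the "semi-Lyapunov" function $W(t) := y(t)^T \mbf{P}_{\sigma(t)} y(t)$, which by \eqref{eq:LMI2} is equivalent (up to the constants $\underline{m}, \overline{m}$) to $\|\mbf{\Pi}_\V y(t)\|_2^2$, and to show it decays exponentially on average. First I would establish that on any interval $[t_i, t_{i+1})$ on which mode $q = \sigma(t_i^+)$ is active, $W$ evolves according to $\dot{W}(t) = y(t)^T\big(\mbf{P}_q \mbf{A}_q(x(t)) + \mbf{A}_q^T(x(t))\mbf{P}_q\big)y(t)$. The key observation is that because $\V^\perp$ is infinitesimally-invariant for every mode (equivalently \eqref{eq:projection-form} holds), the component of $y$ in $\V$ evolves autonomously in the relevant sense: writing $\mbf{P}_q = \mbf{P}_q \mbf{\Pi}_\V = \mbf{\Pi}_\V \mbf{P}_q$ (valid since $\ker(\mbf{P}_q) = \V^\perp$, so $\ran(\mbf{P}_q) \subseteq \V$), one can insert projectors and replace $\mbf{A}_q$ by $\mbf{\Pi}_\V \mbf{A}_q \mbf{\Pi}_\V$ inside the quadratic form. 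This lets me apply \eqref{eq:LMI3}–\eqref{eq:LMI4}: on intervals where $q \in \mc{S}_{\mbf{P}_q}$, $\dot{W}(t) \le -2\eta_\mc{S} W(t)$, so $W$ contracts by a factor $\mrm{e}^{-2\eta_\mc{S}\tau}$ over a dwell of length $\tau$; on intervals where $q \in \mc{U}_{\mbf{P}_q}$, $\dot{W}(t) \le 2\eta_\mc{U} W(t)$, so $W$ can grow by at most $\mrm{e}^{2\eta_\mc{U}\tau}$.

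Next I would account for the jumps at switching instants. At $t_i$, $W$ jumps from $y(t_i)^T \mbf{P}_{\sigma(t_i^-)} y(t_i)$ to $y(t_i)^T \mbf{P}_{\sigma(t_i^+)} y(t_i)$ (recall $y$ is continuous), and \eqref{eq:LMI1} bounds this jump by the factor $\beta_\mc{S}$ after a stable mode and $\beta_\mc{U} < 1$ after an unstable mode. Composing the flow estimates and the jump estimates along $[t_0, t)$, I obtain a product bound: $W(t) \le W(t_0)\,\beta_\mc{S}^{N_\mc{S}(t_0,t)}\,\beta_\mc{U}^{N_\mc{U}(t_0,t)}\,\mrm{e}^{-2\eta_\mc{S} T_\mc{S}(t_0,t)}\,\mrm{e}^{2\eta_\mc{U} T_\mc{U}(t_0,t)}$, where $N_\mc{S}, T_\mc{S}$ (resp. $N_\mc{U}, T_\mc{U}$) aggregate the counts and total activation times over stable (resp. unstable) modes. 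Grouping the exponents mode-by-mode and using the MDADT bound \eqref{eq:MDADT} for each $q \in \mc{S}_{\mbf{P}_q}$ (which, since $\beta_\mc{S} > 1$, gives $\beta_\mc{S}^{N_q} \le \beta_\mc{S}^{\underline{N}_q} \mrm{e}^{(\ln\beta_\mc{S}/\underline{\tau}_q) T_q}$) and the MDALT bound \eqref{eq:MDALT} for each $q \in \mc{U}_{\mbf{P}_q}$ (which, since $\beta_\mc{U} < 1$ so $\ln\beta_\mc{U} < 0$, gives $\beta_\mc{U}^{N_q} \le \beta_\mc{U}^{\overline{N}_q}\mrm{e}^{(\ln\beta_\mc{U}/\overline{\tau}_q)T_q}$), each mode contributes a net exponential rate $-(2\eta_\mc{S} - \ln\beta_\mc{S}/\underline{\tau}_q) < 0$ or $-(-2\eta_\mc{U} - \ln\beta_\mc{U}/\overline{\tau}_q) < 0$; positivity of both quantities is exactly the content of \eqref{SMADTi}–\eqref{UMADTi}. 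Taking $2\lambda$ to be the minimum over all modes of these positive rates, absorbing the finitely many constants $\beta_\mc{S}^{\underline{N}_q}$, $\beta_\mc{U}^{\overline{N}_q}$ into a single constant, and finally using \eqref{eq:LMI2} to pass between $W$ and $\|\mbf{\Pi}_\V y\|_2^2$, yields \eqref{eq:Conv_Rate} with an appropriate $k$ and the above $\lambda$.

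The main obstacle I anticipate is the bookkeeping in the first step: justifying rigorously that the non-contracting behavior of the full matrix $\mbf{A}_q$ in the directions $\V^\perp$ does not leak into the estimate for $W$. Everything hinges on the identity $\mbf{P}_q \mbf{A}_q(x) \mbf{P}_q = \mbf{P}_q \mbf{A}_q(x) \mbf{\Pi}_\V \mbf{P}_q$ and, more precisely, on being able to substitute $\mbf{\Pi}_\V \mbf{A}_q(x)\mbf{\Pi}_\V$ for $\mbf{A}_q(x)$ when sandwiched against $\mbf{P}_q$; this uses $\mbf{\Pi}_\V \mbf{A}_q(x)\mbf{\Pi}_{\V^\perp}$-type terms vanishing against $\mbf{P}_q$, which follows from the infinitesimal invariance of $\V^\perp$ together with $\ran(\mbf{P}_q) \subseteq \V$ and symmetry. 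A secondary technical point is confirming that $W$ is genuinely differentiable on each open dwell interval (which is immediate since $x(\cdot)$ is $C^1$ there and $\sigma$ is constant) and right-continuous with the stated jumps at switches, so that the piecewise integration and the telescoping product are valid.
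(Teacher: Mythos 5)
Your proposal is correct and follows essentially the same route as the paper's proof: the quadratic function $y^T\mbf{P}_{\sigma(t)}y$, the projector insertion using $\ker(\mbf{P}_q)=\V^\perp$ and the infinitesimal invariance of $\V^\perp$ to reduce to \eqref{eq:LMI3}--\eqref{eq:LMI4}, the jump bounds from \eqref{eq:LMI1}, the mode-wise use of \eqref{eq:MDADT}--\eqref{eq:MDALT}, and the final passage to $\|\mbf{\Pi}_\V y\|_2$ via \eqref{eq:LMI2}. Your explicit tracking of the factor $2$ in the decay rate (taking square roots at the end) is in fact slightly cleaner than the paper's concluding step, but the argument is the same.
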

\begin{proof}
Given the switching signal $\sigma$, the following quadratic auxiliary function is defined for the
variational system \eqref{eq:var_sys}:
\begin{align}\label{eq:lyapulikefunction}
    V(y(t),t) = y^T(t) {\mbf{P}}_{\sigma(t)} y(t).
\end{align}
Assume that $\sigma(t) =q$ holds for $t \in [t_{i-1},t_i)$  between two subsequent switching times in $\mathcal{T}_\sigma$, there exists:
\begin{align}\label{eq:timederivative}
    \dot{V} (y(t),t) = y^T(t)({\mbf{P}}_q \mbf{A}_q(x(t))+ \mbf{A}^T_q(x(t)){\mbf{P}}_q )y(t).
\end{align}
Based on \eqref{eq:orth_proj_constr}, it is known that:
\begin{align}\label{eq:Aqx}
   {\mbf{P}}_q\mbf{A}_q(x)= {\mbf{P}}_q(\mbf{\Pi}_{\V^{\perp}}+\mbf{\Pi}_{\V})\mbf{A}_q(x)(\mbf{\Pi}_{\V^{\perp}}+\mbf{\Pi}_{\V}) \\ \notag
     = {\mbf{P}}_q \bigg(\mbf{\Pi}_{\V^{\perp}}\mbf{A}_q(x)  \mbf{\Pi}_{\V^{\perp}} +  \mbf{\Pi}_{\V^{\perp}}\mbf{A}_q(x) \mbf{\Pi}_{\V} \\ \notag  +  \mbf{\Pi}_{\V}\mbf{A}_q(x)\mbf{\Pi}_{\V^{\perp}} +\mbf{\Pi}_{\V}\mbf{A}_q(x)\mbf{\Pi}_{\V} \bigg).
\end{align}
As $\ker(\mbf{P}_q) = \V^\perp$ and thus $ {\mbf{P}}_q\mbf{\Pi}_{\V^{\perp}} = \mbf{0}$, there exists $ {\mbf{P}}_q\mbf{\Pi}_{\V} = {\mbf{P}}_q$ according to \eqref{eq:proj_decom}, which casts \eqref{eq:Aqx} into:
\begin{align}\label{eq:pqaq}
    {\mbf{P}}_q\mbf{A}_q({x})= {\mbf{P}}_q\mbf{A}_q({x})  \mbf{\Pi}_{\V^{\perp}} +  {\mbf{P}}_q \mbf{A}_q({x}) \mbf{\Pi}_{\V}.
\end{align}
Furthermore, since $\V^\perp$ is assumed to be infinitesimally-invariant with respect to the mode $q$, there exists:
\begin{align}\label{eq:invariantuse}
 {\mbf{P}}_q \mbf{A}_q({x}) \mbf{\Pi}_{\V^{\perp}} = {\mbf{P}}_q\mbf{\Pi}_{\V}  \mbf{A}_q({x}) \mbf{\Pi}_{\V^{\perp}} = \mbf{0}
\end{align}
according to \eqref{eq:projection-form}, which further casts  \eqref{eq:pqaq} into:
\begin{align}\label{eq:P_Proj}
    {\mbf{P}}_q\mbf{A}_q({x})= {\mbf{P}}_q\mbf{A}_q({x})  \mbf{\Pi}_{\V}.
\end{align}
The  time derivative $\dot{V} (y(t),t)$ in \eqref{eq:timederivative} thus satisfies:
\begin{align}\label{eq:Lya_der}
    \dot{V} ({y}(t),t) \hspace{-.7mm}= \hspace{-.7mm} {y}^T(t)({\mbf{P}}_q\mbf{A}_q({x}(t))  \mbf{\Pi}_{\V}\hspace{-.7mm} +\hspace{-.7mm}\mbf{\Pi}_{\V}\mbf{A}_q^T({x}(t)){\mbf{P}}_q  ){y}(t)
\end{align}
and due to \eqref{eq:LMI3} and \eqref{eq:LMI4}, there further exist:
\begin{align}\label{eq:stabledecrease}
     \dot{V} ({y}(t),t) \leq -2\eta_{\mc{S}} {V} ({y}(t),t), ~\text{if} ~q \in \mathcal{S}_{\mbf{P}_q}
\end{align}
and:
\begin{align}\label{eq:unstableincrease}
     \dot{V} ({y}(t),t) \leq 2\eta_{\mc{U}} {V} ({y}(t),t), ~\text{if} ~q \in \mathcal{U}_{\mbf{P}_q}
\end{align}
for  $t \in [t_{i-1},t_i)$. Accordingly, the relations:
\begin{align}\label{eq:Lya_bnd_S}
   V\hspace{-.5mm}(y(t^+_i),t^+_i) \hspace{-.8mm}\leq \hspace{-.8mm}\beta_\mc{S} \hspace{-.2mm}\mrm{e}^{-\hspace{-.5mm}2\eta_{\mc{S}}(t_i\hspace{-.5mm}-\hspace{-.5mm}t_{i\hspace{-.5mm}-\hspace{-.5mm}1})} \hspace{-.3mm}   V\hspace{-.5mm}(y(t^+_{i\hspace{-.5mm}-\hspace{-.5mm}1}),t^+_{i\hspace{-.5mm}-\hspace{-.5mm}1}), \text{if} \hspace{.4mm} q \hspace{-.8mm} \in\hspace{-.8mm} \mc{S}_{\mbf{P}_q}
\end{align}
and:
\begin{align}\label{eq:Lya_bnd_U}
   V(y(t^+_i),t^+_i) \hspace{-.5mm}\leq \hspace{-.5mm}\beta_\mc{U} \hspace{-.2mm}\mrm{e}^{2\eta_{\mc{U}}(t_i\hspace{-.5mm}-\hspace{-.5mm}t_{i\hspace{-.5mm}-\hspace{-.5mm}1})}    V\hspace{-.3mm}(y(t^+_{i\hspace{-.5mm}-\hspace{-.5mm}1}),t^+_{i\hspace{-.5mm}-\hspace{-.5mm}1}), \text{if} \hspace{.4mm} q  \hspace{-.8mm}\in \hspace{-.8mm} \mc{U}_{\mbf{P}_q}
\end{align}
must hold after a mode switching has occurred at time $t_i$ according to \eqref{eq:LMI1}. By recursively adopting \eqref{eq:Lya_bnd_S} and \eqref{eq:Lya_bnd_U} for any two adjacent switching times in $\mathcal{T}_\sigma$ starting from $t_0$, it is known that:
\begin{align}\label{eq:Vreduction}
&  V(y(t),t)\leq  \mrm{e}^{\sum\limits_{q\in \mc{S}_{\mbf{P}_q}}N_q(t_0,t)\ln(\beta_\mc{S})- 2\eta_{\mc{S}} T_q(t_0,t)} \cdot \notag  \\ 
  & ~~~~ \qquad   \mrm{e}^{\sum\limits_{ q\in \mc{U}_{\mbf{P}_q}} \hspace{-0.5mm} N_q(t_0,t)\ln(\beta_\mc{U})+2\eta_{\mc{U}} T_q(t_0,t)} V(y(t_0),t_0)
      \end{align}
for all $t \ge t_0$. Then, based on  \eqref{eq:MDADT} and \eqref{eq:MDALT} one knows that:
\begin{align}\label{eq:upperone}
  &  N_q(t_0,t)\ln(\beta_\mc{S})- 2\eta_{\mc{S}} T_q(t_0,t) \notag  \\
  &\leq \underline{N}_{q}\ln(\beta_\mc{S})\hspace{-.8mm} +\hspace{-.8mm} \biggl(\hspace{-.8mm}- 2\eta_{\mc{S}} \hspace{-.8mm}+ \hspace{-.8mm}\frac{{\ln(\beta_\mc{S})}}{\underline{\tau}_{q}}\biggr) T_q(t_0,t), ~\text{if} ~ q \hspace{-.5mm}\in\hspace{-.5mm} \mc{S}_{\mbf{P}_q}
\end{align}
 and similarly:
 \begin{align}\label{eq:uppertwo}
  &  N_q(t_0,t)\ln(\beta_\mc{U}) + 2\eta_{\mc{U}} T_q(t_0,t)\notag  \\ 
  & \leq \overline{N}_{q}\ln(\beta_\mc{U}) \hspace{-.8mm} + \hspace{-.8mm} \biggl(2\eta_{\mc{U}}\hspace{-.8mm} +\hspace{-.8mm} \frac{{\ln(\beta_\mc{U})}}{\overline{\tau}_{q}}\biggr) T_q(t_0,t),  ~\text{if} ~ q \hspace{-.5mm}\in\hspace{-.5mm} \mc{U}_{\mbf{P}_q}.
\end{align}
Due to \eqref{SMADTi} and \eqref{UMADTi}, the term $- 2\eta_{\mc{S}}+\frac{\ln(\beta_\mc{S})}{\underline{\tau}_{q}}$ in \eqref{eq:upperone} and the term $2\eta_{\mc{U}}+\frac{{\ln(\beta_\mc{U}}}{\overline{\tau}_{q}}$ are ensured to be negative. By letting:
 \begin{align}\label{eq:lambdaselection}
\lambda:= \min \{|- 2\eta_{\mc{S}}+\frac{\ln(\beta_\mc{S})}{\underline{\tau}_{q}}|, ~ |2\eta_{\mc{U}}+\frac{{\ln(\beta_\mc{U})}}{\overline{\tau}_{q}}| \}
\end{align}
and defining:
\begin{align}\label{eq:cselection}
     c :=\mrm{e}^{\sum\limits_{q\in \mc{S}_{\mbf{P}_q}} \underline{N}_{q}\ln(\beta_\mc{S})+\sum\limits_{q\in  \mc{U}_{\mbf{P}_q}} \overline{N}_{q}\ln(\beta_\mc{U})},
 \end{align}
the inequality \eqref{eq:Vreduction} is cast into:
 \begin{align}\label{eq:valuereductionfinal}
       V(y(t),t)\leq c \mrm{e}^{-\lambda(t-t_0)} V(y(t_0),t_0).
 \end{align}
 Furthermore, due to \eqref{eq:LMI2} there exists:
\begin{align}\label{eq:reductioninsubspace}
\underline{m}   y^T(t) \mbf{\Pi}_{\V} y(t) \le  V(y(t),t) \le \overline{m}  y^T(t)  \mbf{\Pi}_{\V} y(t)
\end{align}
which yields:
\begin{align}\label{eq:reductioninsubspace2}
\underline{m}     \|\mbf{\Pi}_{\V} y(t)\|_2^2 \le  V(y(t),t) \le \overline{m}     \|\mbf{\Pi}_{\V} y(t)\|_2^2
\end{align}
because of $\mbf{\Pi}_\V^2 = \mbf{\Pi}_\V$. Based on \eqref{eq:reductioninsubspace2}, the inequality \eqref{eq:valuereductionfinal} can be reformulated to:
 \begin{align}
   \|\mbf{\Pi}_{\V} y(t)\|_2^2\leq c \frac{\overline{m}}{\underline{m}}\mrm{e}^{-\lambda(t-t_0)} \|\mbf{\Pi}_{\V} y(t_0)\|_2^2,
 \end{align}
and thus finishes the proof. \hfill$\Box$ 
\end{proof}

The conditions \eqref{eq:LMI1} to \eqref{eq:LMI4} are tailored to the matrix 
$\mbf{P}_q$ in the full state space $\R^n$, whereas the  function $V(y(t),t) = y^T(t) {\mbf{P}}_{\sigma(t)} y(t)$ is defined to measure the distance between $y(t)$ and the origin only within the subspace $\V$. The following demonstrates that these  conditions can be equivalently replaced by those defined on $\V$. First, given the nonsingular and orthogonal matrix $\mbf{T}$ determined by \eqref{eq:Tmatrix}, and defining  $\Tilde{\mbf{A}}_q(x):= \mbf{T}^T \mbf{A}_q (x)\mbf{T} $, it is known that:
\begin{align}\label{eq:Aqtransformation}
\Tilde{\mbf{A}}_q(x)\hspace{-1.2mm} =\hspace{-1.2mm}\begin{bmatrix}
        \mbf{V}^T \hspace{-1mm}\mbf{A}_q(x)\hspace{-.5mm} \mbf{V} &  \mbf{V}^T\hspace{-1mm} \mbf{A}_q(x)\hspace{-.5mm} \mbf{U} \\
         \mbf{U}^T \hspace{-1mm}\mbf{A}_q(x) \hspace{-.5mm} \mbf{V} &  \mbf{U}^T\hspace{-1mm} \mbf{A}_q(x) \hspace{-.5mm}\mbf{U}
    \end{bmatrix}  \hspace{-1.2mm} 
    = \hspace{-1.2mm} \begin{bmatrix}
        \Tilde{\mbf{A}}_{11}^q(x) &  \Tilde{\mbf{A}}_{12}^q(x) \\
          \Tilde{\mbf{A}}_{21}^q(x) &  \Tilde{\mbf{A}}_{22}^q(x)
    \end{bmatrix}
\end{align}
with $\Tilde{\mbf{A}}_{11}^q(x) \in \R^{h\times h}$ and $\Tilde{\mbf{A}}_{2}^q(x) \in \R^{(n-h)\times (n-h)}$. Define also $\Tilde{\mbf{y}}(t) := \mbf{T}^T \mbf{y}(t)$, the time derivative in \eqref{eq:Lya_der} satisfies:
\begin{align}\label{eq:reformulationyt}
  \dot{V} \hspace{-.4mm}(\hspace{-.3mm}{y}(t),\hspace{-.3mm} t\hspace{-.3mm}) &\hspace{-1mm}=\hspace{-1mm} \Tilde{\mbf{y}}\hspace{-.3mm}(t)^T\hspace{-.3mm}\mbf{T}^T \hspace{-.7mm} (\hspace{-.3mm}{\mbf{P}}_q\mbf{A}_q(\hspace{-.4mm}{x}(t)\hspace{-.4mm})  \mbf{\Pi}_{\V}\hspace{-.9mm} +\hspace{-.9mm}\mbf{\Pi}_{\V}\mbf{A}_q^T\hspace{-.4mm}(\hspace{-.4mm}{x}(t)\hspace{-.4mm}){\mbf{P}}_q \hspace{-.3mm} ) \mbf{T}\Tilde{\mbf{y}}\hspace{-.3mm}(t)\notag \\
 & = \Tilde{\mbf{y}}(t)^T \mbf{T}^T {\mbf{P}}_q  \mbf{T} \Tilde{\mbf{A}}_q(x(t)) \mbf{T}^T  \mbf{\Pi}_{\V} \mbf{T}\Tilde{\mbf{y}}(t) \notag \\
 &   \qquad +  \Tilde{\mbf{y}}(t)^T \mbf{T}^T \mbf{\Pi}_{\V}\mbf{T}\Tilde{\mbf{A}}^T_q(x(t))\mbf{T}{\mbf{P}}_q \mbf{T}\Tilde{\mbf{y}}(t).
\end{align}
Based on \eqref{eq:P_tilde}, there exists:
 \begin{align}\label{eq:P_tildenewq}
\mbf{T}^T \mbf{P}_q \mbf{T} = \begin{bmatrix}
    \Tilde{\mbf{P}}_q & \mbf{0}\\
    \mbf{0}& \mbf{0}
\end{bmatrix},~\Tilde{\mbf{P}}_q \in \R^{h\times h},
 \end{align}
and according to \eqref{eq:proj_decom}, the time derivative in \eqref{eq:reformulationyt} can be further reformulated into:
\begin{align}\label{eq:Lya_sim}
     \dot{V}\hspace{-.5mm}  (\mbf{y}(t), t)\hspace{-.7mm}  = \hspace{-.7mm} 
\Tilde{\mbf{y}}(t)^T\hspace{-.5mm} (\hspace{-.5mm} \begin{bmatrix}
    \Tilde{\mbf{P}}_q\Tilde{\mbf{A}}_{11}^q({x}(t))\hspace{-1mm} +\hspace{-1mm}\Tilde{\mbf{A}}_{11}^q(x(t))^T  \Tilde{\mbf{P}}_q&\mbf{0}\\
    \mbf{0} & \mbf{0}
\end{bmatrix}\hspace{-.5mm} )  \Tilde{\mbf{y}}(t).
\end{align}
Moreover, since there exist:
\begin{align}\label{eq:l2-measurenew}
 \mu_{2,\mbf{P}^{1/2}}(\Tilde{\mbf{A}}_q)=&\argmin\limits_{\mathbf{P}\mbf{T}^T \mbf{A}_q\mbf{T}\boldsymbol{\Pi}_{\V^\perp}+ \boldsymbol{\Pi}_{\V^\perp}\mbf{T}\mbf{A}^T_q  \mbf{T}^T \mathbf{P} \preceq 2b \mathbf{P} }b \notag \\
&  =\argmin\limits_{\Tilde{\mbf{P}}_q\Tilde{\mbf{A}}_{11}^q + \Tilde{\mbf{A}}_{11}^{q,T}\Tilde{\mbf{P}}_q \preceq 2b \Tilde{\mbf{P}}_q }b
\end{align}
due to \eqref{eq:proj_decom} and  \eqref{eq:P_tildenewq}, a mode $q \in \mathcal{M}$ that belongs to the set $\mc{S}_{\mbf{P}_q}$ according to \eqref{eq:stablemodes}, it must also belong to the set $\mc{S}_{\Tilde{\mbf{P}}_q }$ (same for  $\mc{U}_{\Tilde{\mbf{P}}_q }$). Finally, the   conditions  \eqref{eq:LMI1} - \eqref{eq:LMI4}  are noticed to be equivalent to:
\begin{align}
   &  {\Tilde{\mbf{P}}}_{\sigma(t^+_i)} \preceq \beta_\mc{S} {\Tilde{\mbf{P}}}_{\sigma(t_i^-)}, ~ \text{if} ~ \sigma(t_i^-) \in \mc{S}_{{\Tilde{\mbf{P}}}_{\sigma(t_i^-)}} \label{eq:LMI1new}\\
   &  {\Tilde{\mbf{P}}}_{\sigma(t^+_i)} \preceq \beta_\mc{U} {\Tilde{\mbf{P}}}_{\sigma(t_i^-)}, ~ \text{if} ~ \sigma(t_i^-) \in \mc{U}_{{\Tilde{\mbf{P}}}_{\sigma(t_i^-)}} \label{eq:LMI12new}\\
 &   \underline{m} \mbf{I}_h \preceq \Tilde{\mbf{P}}_q \preceq  \overline{m} \mbf{I}_h, ~\forall q \in  \mathcal{M} \label{eq:LMI2new} \\
&  \Tilde{\mbf{P}}_q\Tilde{\mbf{A}}_{11}^q({x})\hspace{-1mm} +\hspace{-1mm}\Tilde{\mbf{A}}_{11}^q(x)^T\Tilde{\mbf{P}}_q \preceq \hspace{-0.5mm}-2\eta_{\mc{S}}     \Tilde{\mbf{P}}_q,~ \text{if} ~ q \in \mc{S}_{\Tilde{\mbf{P}}_q } \label{eq:LMI3new} \\
&  \Tilde{\mbf{P}}_q\Tilde{\mbf{A}}_{11}^q({x})\hspace{-1mm} +\hspace{-1mm}\Tilde{\mbf{A}}_{11}^q(x)^T\Tilde{\mbf{P}}_q \hspace{-0.5mm}\preceq \hspace{-0.5mm}2\eta_{\mc{U}}     \Tilde{\mbf{P}}_q, ~\text{if} ~ q \in \mc{U}_{\Tilde{\mbf{P}}_q }. \label{eq:LMI4new}
\end{align}
In cases where   $h \ll n$, it is only necessary to search for a matrix  $\Tilde{\mbf{P}}_q \in \R^{h\times h}$  with a much lower dimension than $\mbf{P}_q$ to fulfill \eqref{eq:LMI1new} - \eqref{eq:LMI4new}.  Subsequently, the  matrix $\mbf{P}_q$ that satisfies \eqref{eq:LMI1} - \eqref{eq:LMI4} can be directly recovered according to \eqref{eq:P_tildenewq}.

\subsection{Contraction Analysis for the State Space $\R^n$}

Note that Lemma~\ref{lem:sem-con} establishes conditions under which the state $y(t)$ of the variational system  \eqref{eq:var_sys} is exponentially stable in a subspace $\V$. To ensure that $y(t)$ is also stable in  $\R^n$, a  family of seminorms  $\mc{F} = \{|||\cdot |||_i\}_{i \in \mc{I}}$, $\mc{I} := \{1,2,...,I\}$, that constitutes  a separating family on $\R^n$ according to \eqref{eq:conditionseperating} is considered.


\begin{theorem}\label{theorem:sem-conyt}

For a set of subspace $\V_i \subseteq \R^n$,  $i \in \mc{I}$, it is assumed that  the semi-norms     $|||\cdot|||_i=\| \mbf{\Pi}_{\V_i} \,(\cdot)\,\|_2$ with $\mrm{ker}(|||\cdot |||_i)=\V^\perp_i$ form a separating family $\mc{F} = \{|||\cdot |||_i\}_{i \in \mc{I}}$  of $\R^n$. Furthermore, each $\V^\perp_i$ is infinitesimally-invariant with respect to all modes   $q \in \mathcal{M}$ of   \eqref{eq:sys}. Then, given a switching  signal $\sigma(t)$, suppose that the conditions \eqref{eq:LMI1} -  \eqref{eq:LMI4} are satisfied for each $\V_i$,  $i \in \mc{I}$, by:
\begin{itemize}
\item  a set of semi-positive definite matrices $\mbf{P}_{q,i}\in \R^{n\times n}$, $\ker(\mbf{P}_{q,i}) = \V^\perp_i$, for each mode $q \in  \mathcal{M}$, and;
\item  a set of constants $\beta_{\mc{S}, i} >1$ and $\beta_{\mc{U}, i} \in (0,1)$ in \eqref{eq:LMI1}, and $\underline{m}_i, \overline{m}_i, \eta_{\mc{U}, i}, \eta_{\mc{S},i} >0$ in \eqref{eq:LMI2} - \eqref{eq:LMI4}.
\end{itemize}
If the average dwell and  leave  time of each mode $q\in  \mathcal{M}$ satisfy:
 \begin{align} 
    \underline{\tau}_{q} &> \frac{ \max\limits_{i \in \mc{I}}  \ln \beta_{\mc{S}, i}}{2 \cdot \min\limits_{i \in \mc{I}} \eta_{\mc{S}, i}}, ~  \text{if}~ q \in  \mc{S}_{\mbf{P}_{q,i}},~ \forall i \in \mc{I}\label{newSMADTi}\\
    \overline{\tau}_{q} &<-\frac{\max\limits_{i \in \mc{I}}  \ln \beta_{\mc{U}, i} }{2 \cdot \min\limits_{i \in \mc{I}} \eta_{\mc{U}, i}}, ~ \text{if}~ q \in \mc{U}_{\mbf{P}_{q,i}},~ \forall i \in \mc{I}\label{newUMADTi}
\end{align}
in the switching signal $\sigma(t)$, then   the  switched system \eqref{eq:sys} is uniformly contracting.
\end{theorem}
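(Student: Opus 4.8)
The plan is to deduce the statement from Theorem~\ref{theorem:contractingvariation}: it suffices to establish the Euclidean estimate \eqref{eq:uniformstabilizing} for the variational system \eqref{eq:var_sys}, and this I would obtain by running Lemma~\ref{lem:sem-con} once on each subspace $\V_i$ and then recombining the $I$ resulting subspace estimates through the norm that the separating family $\mc{F}$ induces on $\R^n$.

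The first step is to check that the aggregated switching-time bounds \eqref{newSMADTi}--\eqref{newUMADTi} force, for every fixed $i\in\mc{I}$, the single-subspace bounds \eqref{SMADTi}--\eqref{UMADTi} with the parameters $\beta_{\mc{S},i},\beta_{\mc{U},i},\eta_{\mc{S},i},\eta_{\mc{U},i}$ attached to $\V_i$. On the dwell-time (stable) side this is pure monotonicity: $\max_{j\in\mc{I}}\ln\beta_{\mc{S},j}\ge\ln\beta_{\mc{S},i}$ and $0<\min_{j\in\mc{I}}\eta_{\mc{S},j}\le\eta_{\mc{S},i}$, so the right-hand side of \eqref{newSMADTi} dominates $\ln\beta_{\mc{S},i}/(2\eta_{\mc{S},i})$ and hence $\underline{\tau}_q>\ln\beta_{\mc{S},i}/(2\eta_{\mc{S},i})$ as required by \eqref{SMADTi}; the leave-time side is treated analogously, reading off the monotonicity with the correct signs since $\ln\beta_{\mc{U},i}<0$. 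Assuming each mode is consistently classified across the family (so that \eqref{newSMADTi} or \eqref{newUMADTi} actually applies to it) and that the standing boundedness $x(t)\in\mc{D}$ of Lemma~\ref{lem:sem-con} holds, Lemma~\ref{lem:sem-con} applied to $\V_i$ with the data $\mbf{P}_{q,i},\underline{m}_i,\overline{m}_i,\eta_{\mc{S},i},\eta_{\mc{U},i}$ yields constants $k_i,\lambda_i>0$ with $\|\mbf{\Pi}_{\V_i}y(t)\|_2\le k_i\,\mrm{e}^{-\lambda_i(t-t_0)}\|\mbf{\Pi}_{\V_i}y(t_0)\|_2$ for all $t\ge t_0$ and all $i\in\mc{I}$.

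The second step is the recombination. Since $\mc{F}$ is separating, $\bigcap_{i\in\mc{I}}\V_i^\perp=\bigcap_{i\in\mc{I}}\ker(|||\cdot|||_i)=\{\mbf{0}\}$, so with $\mc{I}$ finite the map $\mbf{v}\mapsto N(\mbf{v}):=\big(\sum_{i\in\mc{I}}\|\mbf{\Pi}_{\V_i}\mbf{v}\|_2^2\big)^{1/2}$ is a norm on $\R^n$, and all norms on $\R^n$ being equivalent there are $\underline{\nu},\overline{\nu}>0$ with $\underline{\nu}\|\mbf{v}\|_2\le N(\mbf{v})\le\overline{\nu}\|\mbf{v}\|_2$. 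Setting $\lambda:=\min_{i\in\mc{I}}\lambda_i>0$ and bounding $\|y(t)\|_2^2\le\underline{\nu}^{-2}N(y(t))^2=\underline{\nu}^{-2}\sum_{i\in\mc{I}}\|\mbf{\Pi}_{\V_i}y(t)\|_2^2$, inserting the $I$ subspace estimates, factoring out $\mrm{e}^{-2\lambda(t-t_0)}$ and $N(y(t_0))^2$, and finally using $N(y(t_0))\le\overline{\nu}\|y(t_0)\|_2$ produces \eqref{eq:uniformstabilizing} with rate $c=\lambda$ and some constant $\alpha>0$. Theorem~\ref{theorem:contractingvariation} then gives that \eqref{eq:sys} is uniformly contracting.

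Everything after the first step — the norm-equivalence bookkeeping on the finite family and the assembly of the exponential bound — is routine. The step requiring genuine care is the first one: verifying that the aggregated MDADT/MDALT thresholds really dominate every individual subspace threshold. The stable/dwell direction is immediate monotonicity, but the unstable/leave direction is delicate because the quantities $\ln\beta_{\mc{U},i}$ are negative and \eqref{UMADTi} is an \emph{upper} bound on $\overline{\tau}_q$, so the $\min$ and $\max$ over $i\in\mc{I}$ must be placed on the correct sides of the fraction; a related point to confirm is that no mode is left unclassified (contracting in some $\V_i$ but not in another), since then \eqref{newSMADTi}--\eqref{newUMADTi} would not directly govern it.
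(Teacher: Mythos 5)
Your overall strategy is exactly the paper's own proof: apply Lemma~\ref{lem:sem-con} once per subspace $\V_i$ with the data $(\mbf{P}_{q,i},\beta_{\mc{S},i},\beta_{\mc{U},i},\eta_{\mc{S},i},\eta_{\mc{U},i})$, recombine the $I$ subspace estimates through the norm induced by the separating family, and conclude via Theorem~\ref{theorem:contractingvariation}. Your second step (the norm $N(\mbf{v})=\bigl(\sum_{i\in\mc{I}}\|\mbf{\Pi}_{\V_i}\mbf{v}\|_2^2\bigr)^{1/2}$, equivalence of norms on $\R^n$, rate $\lambda=\min_i\lambda_i$) is a more explicit version of what the paper asserts in one line, and it is fine; so are the side conditions you make explicit ($x(t)\in\mc{D}$ and consistent classification of each mode across all $i$), which the theorem indeed uses implicitly.

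The step you flagged as delicate is, however, not closed by the ``analogous monotonicity'' you invoke, and this is a genuine gap (shared, to be fair, by the paper's own one-line assertion that \eqref{newSMADTi}--\eqref{newUMADTi} imply \eqref{SMADTi}--\eqref{UMADTi} for every $i$). On the stable side monotonicity works because the larger numerator $\max_i\ln\beta_{\mc{S},i}$ and the smaller denominator $\min_i\eta_{\mc{S},i}$ both push the aggregated threshold upward, so $\underline{\tau}_q$ exceeding it exceeds every individual threshold. On the unstable side the two effects oppose each other: writing $-\max_i\ln\beta_{\mc{U},i}=\min_i(-\ln\beta_{\mc{U},i})$, the aggregated threshold is $\min_i(-\ln\beta_{\mc{U},i})/(2\min_i\eta_{\mc{U},i})$, and the $\min$ in the denominator makes this \emph{larger}, not smaller, than the individual thresholds $-\ln\beta_{\mc{U},i}/(2\eta_{\mc{U},i})$ whenever the minimizing indices differ. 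For instance, with $(-\ln\beta_{\mc{U},1},\eta_{\mc{U},1})=(1,1)$ and $(-\ln\beta_{\mc{U},2},\eta_{\mc{U},2})=(2,0.1)$ the aggregated bound is $5$, whereas $\V_1$ requires $\overline{\tau}_q<0.5$; a leave time $\overline{\tau}_q=1$ then satisfies \eqref{newUMADTi} but Lemma~\ref{lem:sem-con} cannot be invoked on $\V_1$, so your first step fails there. The implication you need does hold when all $\eta_{\mc{U},i}$ coincide (as in the paper's numerical example), and in general it is restored by reading the leave-time threshold as $\min_i\bigl(-\ln\beta_{\mc{U},i}/(2\eta_{\mc{U},i})\bigr)$, equivalently by putting $\max_i\eta_{\mc{U},i}$ in the denominator of \eqref{newUMADTi}; with that repair the remainder of your argument coincides with the paper's proof and is sound.
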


\begin{proof}
For $\underline{\tau}_{q}$ and $\overline{\tau}_{q}$ satisfying \eqref{newSMADTi} and \eqref{newUMADTi}, they must also satisfy  \eqref{SMADTi} and \eqref{UMADTi} for each subspace $\V_i$, $i \in \mc{I}$, which implies the existence of constants $k_i, \lambda_i>0$ satisfying:
\begin{align}
    \|\mbf{\Pi}_{\V_i} y(t)\|_2\leq k_i \mrm{e}^{-\lambda_i(t-t_0)} \|\mbf{\Pi}_{\V_i} y(t_0)\|_2 
\end{align}
according to  Lemma~\ref{lem:sem-con}. Then,   as   a  norm $\| \cdot \|_{\mc{F}}$ on $\R^n$ can be induced by  the separating family $\mc{F}$, there must exist:
    \begin{align}\label{eq:stabilizingytall}
        \|\mbf{y}(t)\|_{\mc{F}} &\leq  k^{*} \mrm{e}^{-\lambda^{*}(t-t_0)} \|\mbf{y}(t_0)\|_{\mc{F}},
    \end{align}
with $k^* >0$ and $\lambda^* := \min_{i \in \mathcal{I}}(\lambda_i)$. As a result, the variational system \eqref{eq:var_sys} is exponentially stabilizing in $\R^n$ and thus the switched system \eqref{eq:sys} is contracting according to Theorem.~\ref{theorem:contractingvariation}. \hfill$\Box$ 
\end{proof}
\begin{remark}
For a given switching signal $\sigma(t)$, let $\hat{T}_{q,r,\sigma}  \in (0, \infty)$, $r \in \mathbb{N}$, denote the activation time of mode $q \in \mathcal{M}$
between the $r$-th and $r+1$-th activations in  $\sigma(t)$. Based on the conditions  \eqref{newSMADTi} and \eqref{newUMADTi} for the average dwell and  leave  times, it is known that when:
 \begin{align*} 
& \hat{T}_{q,r,\sigma}  \ge \frac{ \max\limits_{i \in \mc{I}}  \ln \beta_{\mc{S}, i}}{2 \cdot \min\limits_{i \in \mc{I}} \eta_{\mc{S}, i}}, ~  \text{if}~ q \in  \mc{S}_{\mbf{P}_{q,i}},~ \forall i \in \mc{I}\\
& \hat{T}_{q,r,\sigma}  \le -\frac{\max\limits_{i \in \mc{I}}  \ln \beta_{\mc{U}, i} }{2 \cdot \min\limits_{i \in \mc{I}} \eta_{\mc{U}, i}}, ~ \text{if}~ q \in \mc{U}_{\mbf{P}_{q,i}},~ \forall i \in \mc{I}
\end{align*}
hold for all  $r \in \mathbb{N}$ and for all modes $q \in \mathcal{M}$ in  $\sigma(t)$, then the switched system \eqref{eq:sys} must be contracting according to Theorem.~\ref{theorem:sem-conyt}. Note that although the conditions on the  activation time  $\hat{T}_{q,r,\sigma}$ are more conservative than those for the average time ones, they provide an easier way to determine whether a given switching signal leads to contracting behavior.
\end{remark}


\section{Numerical Examples}
\label{sec:example}
Consider the switched system \eqref{eq:sys} with nonlinear dynamics:
\begin{align*}
&  q=1:\\
&    \begin{bmatrix}
        \dot{x}_1\\
        \dot{x}_2
    \end{bmatrix}\hspace{-1mm} = \hspace{-1mm} \begin{bmatrix}
        -\frac{3}{4}& -\frac{5}{4}\\
        -\frac{5}{4}&  -\frac{3}{4}
    \end{bmatrix} \begin{bmatrix}
        x_1\\
        x_2
    \end{bmatrix} \hspace{-1mm}   + \hspace{-1mm} \begin{bmatrix}
        -\frac{0.2}{\sqrt{2}}&- \frac{0.7}{\sqrt{2}} \\
         -\frac{0.2}{\sqrt{2}}& \frac{0.7}{\sqrt{2}}
    \end{bmatrix} \begin{bmatrix}
        \cos(\frac{0.1}{\sqrt{2}}(x_1 \hspace{-.7mm} +\hspace{-.7mm} x_2))\\
        \cos(\frac{0.1}{\sqrt{2}}(x_1\hspace{-.7mm} -\hspace{-.7mm} x_2))
    \end{bmatrix} \\
  & q=2: \\
  &  \begin{bmatrix}
        \dot{x}_1\\
        \dot{x}_2
    \end{bmatrix}  \hspace{-.7mm}= \hspace{-.7mm} \begin{bmatrix}
        -\frac{3}{4}& \frac{5}{4}\\
        \frac{5}{4}&  -\frac{3}{4}
    \end{bmatrix} \begin{bmatrix}
        x_1\\
        x_2
    \end{bmatrix}  \hspace{-.7mm}  +  \hspace{-.7mm}\begin{bmatrix}
        \frac{0.7}{\sqrt{2}}& \frac{0.2}{\sqrt{2}} \\
         \frac{0.7}{\sqrt{2}}&- \frac{0.2}{\sqrt{2}}
    \end{bmatrix}\begin{bmatrix}
        \cos(\frac{0.1}{\sqrt{2}}(x_1 \hspace{-.7mm}+ \hspace{-.7mm}x_2))\\
        \cos(\frac{0.1}{\sqrt{2}}(x_1 \hspace{-.7mm}- \hspace{-.7mm}x_2))
    \end{bmatrix}.
\end{align*}
\begin{figure}[!t]
    \centering
    \includegraphics[width=.7\linewidth]{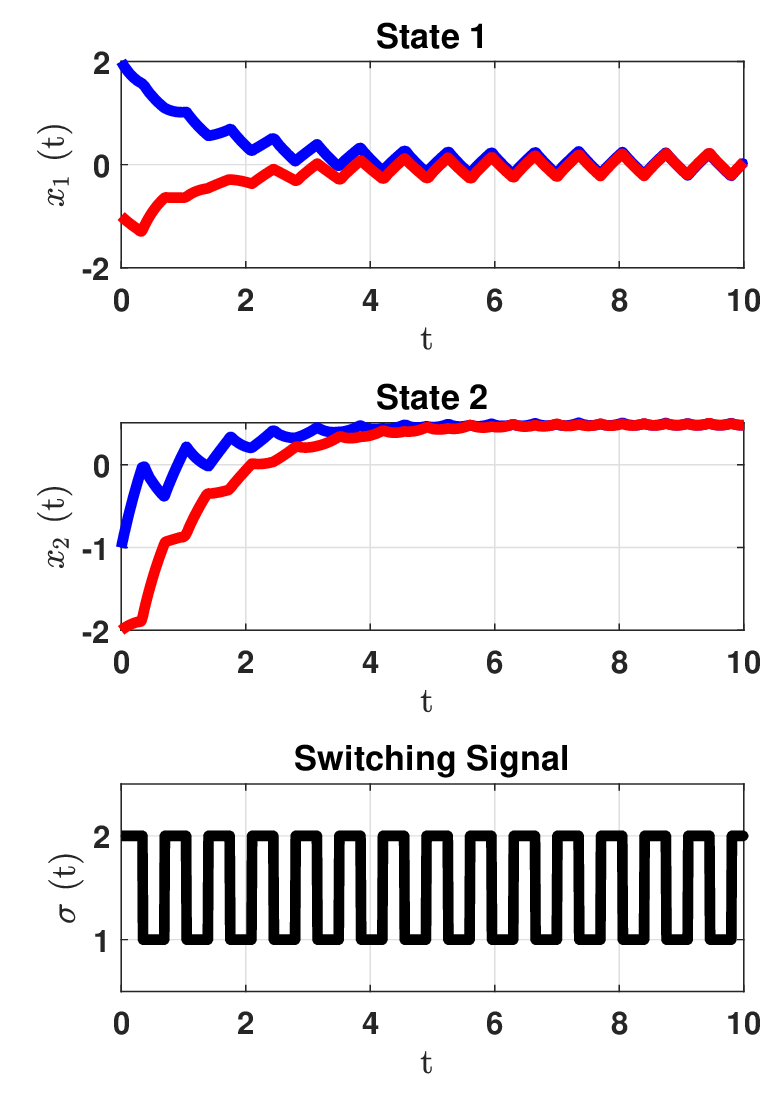}
    \caption{Contracting evolution starting from two different initial states under a periodic switching signal $\sigma(t)$.}
    \label{fig:Sim1}
\end{figure}
Both modes can be identified to be non-contracting as the variational systems  \eqref{eq:var_sys} are unstable. Select the subspaces:
\begin{align*}
  &  \V_1 = \mrm{span}\biggl\{\frac{1}{\sqrt{2}}\begin{bmatrix}
        1\\
        1
    \end{bmatrix}\biggr\},  ~\mbf{\Pi}_{\V_1} = \frac{1}{2} \begin{bmatrix}
        1 &-1\\
        -1& 1
    \end{bmatrix}\\
 &    \V_2 = \mrm{span}\biggl\{\frac{1}{\sqrt{2}}\begin{bmatrix}
        1\\
        -1
    \end{bmatrix}\biggr\}, ~  \mbf{\Pi}_{\V_2} = \frac{1}{2} \begin{bmatrix}
        1 &1\\
        1& 1
    \end{bmatrix}
\end{align*}
and define  $|||\cdot|||_i=\| \mbf{\Pi}_{\V_i} \,(\cdot)\,\|_2$ with $\mrm{ker}(|||\cdot |||_i)=\V^\perp_i$, it is known that $ \ker(|||\cdot|||_1) \cap \ker(|||\cdot|||_2) = \{\mbf{0}\}$, i.e., the two seminorms constitute a separating family $\mc{F}$  of $\R^2$. Then, for semi-positive matrices:
\begin{align*}
  &q\hspace{-1mm} =\hspace{-1mm} 1:~ \mbf{P}_{1,1}\hspace{-1.35mm}  =\hspace{-1.5mm}  \begin{bmatrix}
       1.1389 &\hspace{-1mm}   -\hspace{-.4mm}1.1389\\
   -\hspace{-.4mm}1.1389  & \hspace{-1mm}  1.1389
   \end{bmatrix} ,  \mbf{P}_{1,2}\hspace{-1.35mm} = \hspace{-1.5mm} \begin{bmatrix}
      0.7081 &  \hspace{-1mm} -\hspace{-.2mm}0.7081\\
   -\hspace{-.2mm}0.7081   &\hspace{-1mm}  0.7081
   \end{bmatrix}\\
& q\hspace{-1mm} =\hspace{-1mm} 2: ~ \mbf{P}_{2,1}\hspace{-1mm}  = \hspace{-1mm}\begin{bmatrix}
       0.7081  &\hspace{-.5mm}  0.7081\\
    0.7081  & \hspace{-.5mm} 0.7081
   \end{bmatrix},\mbf{P}_{2,2} \hspace{-1mm}= \hspace{-1mm}\begin{bmatrix}
    1.1389  & \hspace{-.5mm} 1.1389\\
    1.1389  & \hspace{-.5mm} 1.1389
   \end{bmatrix}
\end{align*}
the conditions \eqref{eq:LMI1} -  \eqref{eq:LMI4} are satisfied by modes with constants: 
\begin{align*}
&\beta_{\mc{S},1} = \beta_{\mc{S},2}  = 1.6084, ~\beta_{\mc{U},1}  = \beta_{\mc{U},2}  = 0.6217, \\
&\eta_{\mc{S},1} = \eta_{\mc{S},2} = -1.5, ~\eta_{\mc{U},1} = \eta_{\mc{U},2} = 0.6.
\end{align*}
According to Theorem~\ref{theorem:sem-conyt}, the system is contracting when the average dwell and leave times satisfy:
\begin{align*}
 \underline{\tau}_{1}>0.1584, ~ \underline{\tau}_{2}>0.1584,~  \overline{\tau}_{1}<0.3960, ~\overline{\tau}_{2}<0.3960.
\end{align*}
Thus, for a periodically switching signal $\sigma(t)$ with  $\hat{T}_{1,r,\sigma}=\hat{T}_{2,r,\sigma} = 0.35$, the state trajectories of \eqref{eq:sys} from different initial states  approach to each other, as shown in Fig.~\ref{fig:Sim1}. In another test, a non-periodic switching signal that satisfies the average dwell and leave times is considered, see Fig.~\ref{fig:Sim2}. It is shown that the trajectories are still  contracting as expected.
\begin{figure}[!t]
    \centering
    \includegraphics[width=.7\linewidth]{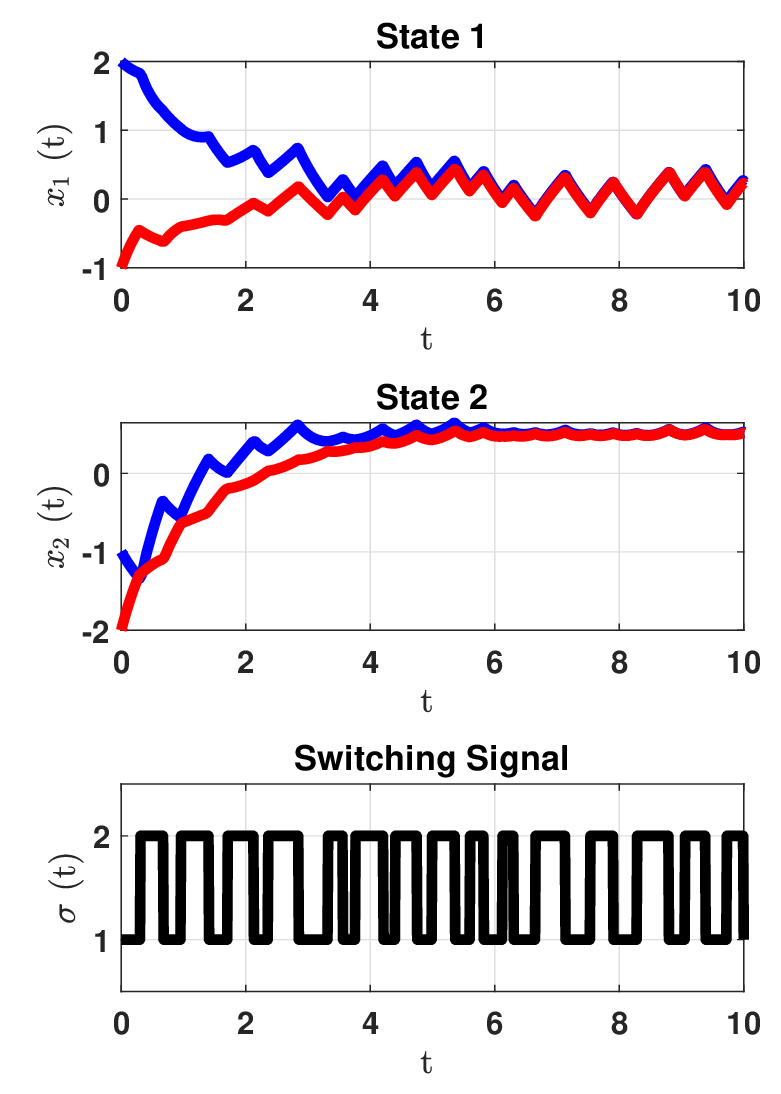}
    \caption{Contracting evolutions for a non-periodic switching signal $\sigma(t)$.}
    \label{fig:Sim2}
\end{figure}

\section{Conclusion}
\label{sec:conclusion}
In this paper, the contraction properties of switched systems composed solely of non-contracting modes are studied. Unlike existing literature, which requires at least one mode to be contracting, it is demonstrated here that by decomposing the state space into several subspaces and satisfying certain switching time conditions related to the state evolution within those subspaces, the switched systems can still exhibit contracting behavior. A crucial step to derive these conditions is the selection of the subspaces, which must form a separating family and be infinitesimally invariant for each mode. Since this selection criterion can be quite restrictive, future work will focus on developing less conservative and more efficient methods for determining the subspaces.

\bibliography{IFACWC_Bibliography}             

\end{document}